\def\cf{cf.\xspace}
\date{}
\newcommand{\ZZ}{\ensuremath{\mathbb{Z}}}
\newcommand{\ZZd}{\ensuremath{{\mathbb{Z}^d}}\xspace}
\newcommand{\NN}{\ensuremath{\mathbb{N}}}
\newcommand{\cantor}{\ensuremath{\left\{0,1\right\}^\NN}\xspace}
\newcommand{\CBd}[1]{\ensuremath{\mathfrak D\left(#1\right)}}
\newcommand{\CBdn}[2]{\ensuremath{{#1}^{\left(#2\right)}}}
\newcommand{\CB}[1]{\ensuremath{\mathfrak{CB}\left(#1\right)}}
\newcommand{\pizu}{\ensuremath{\Pi^0_1}\xspace}
\newcommand{\turinf}{{\ensuremath{\leq_T}}}
\newcommand{\tursup}{{\ensuremath{\geq_T}}}
\newcommand{\turequiv}{{\ensuremath{\equiv_T}}}
\newcommand{\turdeg}{{\ensuremath{\deg_T}}}
\newcommand{\turdegzero}{{\ensuremath{\mathbf{0}}}}
\newcommand{\halts}[2][truie]{\ifthenelse{\equal{#1}{truie}}{#2\!\downarrow}{#2\!\downarrow^{#1}}}
\newcommand{\ca}[1]{\ensuremath{\mathcal{#1}}}
\newcommand{\limitset}[1]{\ensuremath{\Omega\left(#1\right)}}
\newcommand{\cacf}[1]{\ensuremath{\mathfrak{#1}}}
\newcommand{\inter}[2]{\llbracket #1, #2\rrbracket}
\newcommand{\ie}{i.e.\xspace}
\newcommand{\includepicture}[1]{\includegraphics{pix/#1}}
\newtheorem{maintheorem}{Theorem}
\newtheorem{theorem}{Theorem}[section]
\newtheorem{corollary}[theorem]{Corollary}
\newtheorem{lemma}[theorem]{Lemma}
\theoremstyle{remark}
\theoremstyle{definition}
\title{Turing degrees of limit sets of cellular automata}
\author{Alex Borello, Julien Cervelle, Pascal Vanier}
\begin{document}

\maketitle

\begin{abstract}
    Cellular automata are discrete dynamical systems and a model of computation. The limit 
    set of a cellular automaton consists of the configurations having an infinite sequence 
    of preimages. It is well known that these always contain a computable point and that
    any non-trivial property on them is undecidable. We go one step further in this article
    by giving a full characterization of the sets of Turing degrees of cellular automata:
    they are the same as the sets of Turing degrees of effectively closed sets containing a
    computable point.
\end{abstract}

\section{Introduction}
Cellular Automata (CAs for short) are both discrete dynamical systems and a model of computation. They
were introduced in the late 1940s independently by John von~Neumann and Stanislaw Ulam to
study, respectively, self-replicating systems and the growth of quasi-crystals.

A $d$-dimensional CA consists of cells aligned on~\ZZd that may be in a finite number of states, and  are updated synchronously with a local rule, \ie depending only on a finite neighborhood.
All cells operate under the same local rule. The state of all cells at some time step is called
a configuration. CAs are very well known for being simple systems that may exhibit complicated 
behavior. 

A $d$-dimensional subshift of finite type (SFT for short) is a set of colorings of \ZZd by a finite number of
colors containing no pattern from a finite family of forbidden patterns. Most proofs of undecidability
concerning CAs involve the use of SFTs, so both topics are very 
intertwined~\cite{Kar1990,Kar1992,Kar1994,Mey2008,Kar2011}. A recent trend in
the study of SFTs has been to give computational characterizations of dynamical properties, which
has been followed by the study of their computational structure and in particular the comparison
with the computational structure of effectively closed sets, which are the subsets of \cantor on
which some Turing machine does not halt. It is quite easy to see that SFTs are such sets.

In this paper, we follow this trend and study the limit set $\limitset{\ca A}$ of a CA $\ca A$, which consist of all the 
configurations of the CA that can occur after arbitrarily long computations. They were introduced
by \citet{CPY1989} in order to classify CAs. It has
been proved that non-trivial properties on these sets are undecidable 
by \citet{Kar1994b,GR2010} for CAs of all dimensions. Limit sets of CAs are subshifts, and
the question of which subshifts may be limit sets of CA has been a thriving topic, see
\cite{Hur1987,Hur1990b,Hur1990,Maa1995,FK2007,DiLM2009,BGK2011}. However, most of these results 
are on the language of the limit set or on simple limit sets. Our aim here is to study the
configurations themselves. 

In dimension~$1$, limit sets are effectively closed sets, so it is quite 
natural to compare them from a computational point of view. The natural measure of
complexity for effectively closed sets is the Medvedev degree \cite{Sim2011a}, which, 
informally, is a measure of the complexity of the simplest points of the set. As limit 
sets always contain a uniform configuration (wherein all cells are in the 
same state), they always contain a computable point and 
have Medvedev degree 
\turdegzero. Thus, if we want to study their computable structure, we need a finer measure; in
this sense, the set of Turing degrees is appropriate. 

It turns out that for SFTs, there is a
characterization of the sets of Turing degrees found by \citet{JeandelV2013:turdeg}, which states
that one may construct SFTs with the same Turing degrees as any effectively closed set 
containing a computable point. In the case of limit sets, such a characterization would
be perfect, as limit sets always contain a computable point\footnote{Note that this is not the case
for subshifts: there exist non-empty subshifts containing only non-computable points.}. This is
exactly what we achieve in this article:

\begin{maintheorem}\label{mainthm}
 For any effectively closed set $S$, there exists a cellular automaton~$\ca A$ such that
 \[
     \turdeg\limitset{\ca A}=\turdeg{S}\cup\{\turdegzero\}\text{.}
 \]
\end{maintheorem}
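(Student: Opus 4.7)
The plan is to combine the characterization of Turing degrees of SFTs from \citet{JeandelV2013:turdeg} with a CA-simulation construction that realizes the SFT as the ``tile'' part of the limit set. First, since their result requires a computable point in the target set, set $S' = S \cup \{z\}$ for an arbitrary fixed computable $z$; then $S'$ is effectively closed, contains a computable point, and satisfies $\turdeg{S'} = \turdeg S \cup \{\turdegzero\}$. Applying the theorem yields an SFT $Y$, defined by forbidden patterns of radius $r$, such that $\turdeg Y = \turdeg{S'}$.

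Next, I would construct a CA $\ca A$ with alphabet $\Sigma = \Sigma_Y \cup \{\#\}$, where $\#$ is an absorbing ``blank'' symbol, and a local rule of radius $r$ that keeps the cell's state when its entire $r$-ball lies in $\Sigma_Y$ and forms a valid $Y$-pattern, and otherwise switches it to $\#$. Every configuration of $Y$ and the uniform configuration $\#^{\ZZd}$ are then fixed points of $\ca A$, hence lie in $\limitset{\ca A}$, yielding the easy containment $\turdeg\limitset{\ca A} \supseteq \turdeg Y \cup \{\turdegzero\} = \turdeg S \cup \{\turdegzero\}$.

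The main obstacle is the reverse containment, which requires every $c \in \limitset{\ca A}$ to have Turing degree in $\turdeg Y \cup \{\turdegzero\}$. The subtlety is that a priori, mixed configurations with both tile and blank regions may admit infinite preimage chains: preimages of tile cells are rigidly constrained to be valid $Y$-patterns centered at the same tile (so tile regions can only grow backwards in time), while preimages of blank cells only require some local invalidity somewhere, and such ``boundary'' configurations could in principle carry Turing degrees outside $\turdeg Y$. I would rule them out by refining the construction, either embedding hierarchical synchronization markers into $Y$ so that any persistent blank boundary eventually fails a global consistency condition under iterated preimages, or adding a meta-layer in $\ca A$ that runs a Turing machine checking the configuration at increasing scales and clears to $\#$ any region failing the check. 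Under either refinement, every $c \in \limitset{\ca A}$ agrees with a $Y$-configuration or with $\#^{\ZZd}$; since the inclusion $\Sigma_Y \hookrightarrow \Sigma$ is computable, we conclude $\turdeg\limitset{\ca A} = \turdeg Y \cup \{\turdegzero\} = \turdeg S \cup \{\turdegzero\}$, as required.
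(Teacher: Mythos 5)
There is a genuine gap, and it sits exactly where you flag it: the reverse containment. Your first two steps are fine (adding a computable point to $S$ before invoking \citet{JeandelV2013:turdeg} is legitimate, and the forward containment via fixed points works), but the ``refinement'' you invoke to dispose of the mixed blank/tile configurations is not a proof step --- it is the entire difficulty of the theorem. Concretely, in your CA any configuration obtained by restricting a point $y\in Y$ to a suitable region $D$ (e.g.\ a half-space) and filling the complement with $\#$ has an infinite chain of preimages: the blank region simply recedes as one goes back in time, so such a configuration lies in $\limitset{\ca A}$, and its degree is that of a \emph{partial} point of $Y$, which need not lie in $\turdeg Y\cup\{\turdegzero\}$. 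Neither of your proposed fixes obviously closes this. Modifying $Y$ with ``synchronization markers'' changes its degree spectrum unless done with the same care as the original SFT construction; and adding a ``meta-layer Turing machine'' is circular, because the states of that machine become part of the limit set, so one must control \emph{their} degrees, guarantee that exactly one head is present on each relevant configuration, and make sure a rejected region is actually erased before the limit --- which is precisely what Section~\ref{construction} of the paper spends all its effort on (a self-vanishing sparse grid whose computation-free part is countable and computable, a reversible Turing machine run backwards in CA time so that the oracle ``appears from the past'', a Hooper-style robust machine so that a rejected oracle is rejected infinitely often in its past and hence has no infinite preimage chain, and a three-zone marking forcing a unique head).

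There is also a dimension mismatch. The paper's automata are one-dimensional (Section~\ref{prelim}), while the realization theorem of \citet{JeandelV2013:turdeg} that you use as a black box requires SFTs of dimension at least $2$: one-dimensional SFTs are sets of bi-infinite walks on finite graphs and cannot have an arbitrary degree spectrum. So even if the boundary problem were solved, your construction would produce a two-dimensional CA, a strictly weaker statement than the one proved here. The paper avoids the black box entirely and realizes the \pizu class directly inside the limit set of a $1$-dimensional CA; your outline, as it stands, reduces the theorem to a construction that is at least as hard as the one it omits.
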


In the way to achieve this theorem, we introduce a new construction which gives us some 
control over the limit set. We hope that this construction will lead to other unrelated results
on limit sets of CAs, as it was the case for the construction in \cite{JeandelV2013:turdeg}, see
\cite{JeandelV2013}. 

The paper is organized as follows. In Section~\ref{prelim} we recall the usual definitions concerning
CAs and Turing degrees. In Section~\ref{requirements} we give the reasons for each trait of the
construction which allows us to prove theorem~\ref{mainthm}. In Section~\ref{construction} we 
give the actual
construction. We end the paper by a discussion, in Section~\ref{CB}, on the Cantor-Bendixson ranks of the 
limit sets of CAs. The choice has been made to have colored figures, which are best viewed on screen.

\section{\label{prelim}Preliminary definitions}

A ($1$-dimensional) \emph{cellular automaton} is a triple $\ca A = (Q, r, \delta)$, where $Q$ is the finite set of \emph{states}, 
$r > 0$ is the \emph{radius} and $\delta : Q^{2r + 1}\to Q$ the \emph{local transition function}.

An element of~$i\in\ZZ$ is called a \emph{cell}, and the set $\inter{i - r}{i + r}$ is the
\emph{neighborhood} of~$i$ (the elements of which are the \emph{neighbors} of~$i$). A
\emph{configuration} is a function $\cacf c : \ZZ\to Q$. The local transition function induces a
\emph{global transition function} (that can be regarded as the automaton itself, hence the notation),
which associates to any configuration~$\cacf c$ its \emph{successor}:
\[
	\ca A(\cacf c) : \left\{\begin{array}{ccl}
		\ZZ &\to& Q\\
		i &\mapsto& \delta(\cacf c(i - r), \dots, \cacf c(i - 1), \cacf c(i), \cacf c(i + 1), \dots, \cacf c(i + r)))\text{.}
	\end{array}\right.
\]
In other words, all cells are finite automata that update their states in parallel, according to the same local transition rule, 
transforming a configuration into its successor.

If we draw some configuration as a horizontal bi-infinite line of cells, then add its successor above it, then the successor of 
the latter and so on, we obtain a \emph{space-time diagram}, which is a two-dimensional representation of some 
computation performed by~$\ca A$.

A \emph{site} $(i, t)\in\ZZ^2$ is a cell~$i$ at a certain time step~$t$ of the computation we consider 
(hereinafter there will never be any ambiguity on the
automaton nor on the computation considered).

The \emph{limit set} of~$\ca A$, denoted by $\limitset{\ca A}$, is the set of all the configurations that can
appear after arbitrarily many computation steps:
\[
	\limitset{\ca A} = \bigcap_{k\in\NN}\ca A^k(Q^\ZZ)\text{.}
\]

For surjective CAs, the limit set is the set of all possible configurations $Q^\ZZ$, while
for non-surjective CAs, it is the set of all configurations containing no 
orphan of any order, see \cite{Hur1990b}. An \emph{orphan of order~$n$} is a finite word~$w$ 
which has no preimage by $\ca A^n_{|Q^{|w|}}$.

An \emph{effectively closed set}, or \emph{\pizu class}, is a subset~$S$ of \cantor for which there
exists a Turing machine that, given any $x\in\cantor$, halts if and only if $x\not\in S$.
Equivalently, a class $S\subseteq\cantor$ is \pizu if there exists a computable set~$L$
such that $x\in S$ if and only if no prefix of~$x$ is in~$L$. 
It is then quite easy to see that limit sets of CAs are \pizu classes: for any limit set, the set of forbidden patterns is the set of all orphans of all orders, which form a recursively enumerable set, since it is
computable to check whether a finite word is an orphan.

For $x, y\in\cantor$, we say that $x\turinf y$ if $x$ is computable by a Turing machine using $x$
as an oracle. If $x\turinf y$ and $x\tursup y$, $x$ and~$y$ are said to be Turing-equivalent, 
which is noted $x\turequiv y$. The \emph{Turing degree} of~$x$, noted $\turdeg x$, is its equivalence
class under relation~$\turequiv$. The Turing degrees form a lattice whose bottom is \turdegzero,
the Turing degree of computable sequences.

Effectively closed sets are quite well understood from a computational point of view, and 
there has been numerous contributions concerning their Turing degrees, see the book of
\citet{CR1998} for a survey. One of the most interesting results may be that there exist 
\pizu classes whose members are two-by-two Turing incomparable \cite{JS1972}.

\section{\label{requirements}Requirements of the construction}

The idea to prove Theorem~\ref{mainthm} is to make a construction that embeds computations of a Turing 
machine that will check a read-only oracle tape containing a member of the 
\pizu class $S$ that will have to appear ``non-deterministically''. The following constraints have to be addressed.
\begin{itemize}
    \item Since CAs are intrinsically deterministic, this non-determinism will
have to come from the ``past'', \ie 
        from the ``limit'' of the preimages. 
    \item The oracle tape, the element of \cantor that needs to be checked, 
        needs to appear entirely on at least one 
        configuration of the limit set.
    \item Each configuration of the limit set containing the oracle tape 
        needs to have exactly one head of the Turing machine, in order 
        to ensure that there really is a computation going on in the associated space-time diagram.
    \item The construction, without any computation, needs to have a very simple limit set, 
        \ie it needs to be computable, and in particular countable; this to ensure that no complexity 
        overhead will be added to any configuration containing the oracle tape, and that ``unuseful'' 
        configurations of the limit set --~the configurations that do not appear in a 
        space-time diagram corresponding to a computation~-- will be computable.
    \item The computation of the embedded Turing machine needs to go backwards, this to ensure  
        that we can have the non-determinism. And an error in the computation must ensure
        that there is no infinite sequence of preimages.
    \item The computation needs to have a beginning (also to ensure the presence of a head), 
        so the construction needs some marked beginning, and the representation of the oracle and work tapes in the construction need to disappear at this point, otherwise by compactness the part without any computation could 
        be extended bi-infinitely to contain any member of \cantor, thus leading to the full set
        of Turing degrees.
\end{itemize}
There are other constraints that we will discuss during the construction, as they arise.

In order to make a construction complying to all these constraints, we reuse, with heavy modifications,
an idea of~\citet{JeandelV2013:turdeg}, which is to construct a sparse grid. 
However, their construction, being meant for subshifts, requires to be completely 
rethought in order to work for CAs. In particular, there was no determinism in this construction, 
and the oracle tape did not need to appear on a single column/row, since their result was on two-dimensional subshifts.

\section{\label{construction}The construction}
\subsection{\label{sparsegrid}A self-vanishing sparse grid}

In order to have space-time diagrams that constitute sparse grids, the idea is to have columns 
of squares, each of these columns containing less and less squares as we move to the left, see
fig.~\ref{butterfly:baselayer}. The CA
has three categories of states:
\begin{itemize}
    \item a \emph{killer state}, which is a spreading state that erases anything on its path;
    \item a \emph{quiescent state}, represented in white in the figures; its sole purpose is to mark the spaces that are ``outside'' the construction;
    \item some \emph{construction states}, which will be constituted of signals and background colors.
\end{itemize}

In order to ensure that just with the signals themselves it is not possible to encode anything 
non-computable in the limit set, all signals will need to have, at all points, at any time, different 
colors on their left 
and right, otherwise the local rule will have a killer state arise. Here are the main signals.
\begin{itemize}
    \item Vertical lines: serve as boundaries between columns of squares and
        form the left/right sides of the squares.
    \item SW-NE and SE-NW diagonals: used to mark the corners of the squares, they are
        signals of respective speeds $1$ and $-1$. Each time
        they collide with a vertical line (except for the last square of the row), they bounce
        and start the converse diagonal of the next square.
    \item Counting signal: will count the number of squares inside a column; every time 
        it crosses the SW-NE diagonal of a square it will shift to the left. When it is superimposed 
        to a vertical line, it means that the square is the last of its column, so when it crosses
        the next SE-NW diagonal, it vanishes and with it the vertical line.
    \item Starting signals: used to start the next column to the left, at the bottom of one
        column. Here is how they work.
        \begin{itemize}
            \item The bottommost signal, of speed~$-\frac 14$, is at the boundary between the empty part of the
                space-time diagram and the construction. It is started $4$~time steps after the collision
                with the signal of speed~$-\frac 13$.
            \item The signal of speed~$-\frac 13$ is started just after the vertical line sees the incoming SE-NW diagonal of the first square 
                of the row on the right, at distance~$3$\footnote{That can be done, provided the radius of the CA is large enough.} (the diagonal will collide with the vertical line $2$~time steps after the start of that signal).
            \item At the same time as the signal of speed~$-\frac 13$ is created, a signal of speed~$-\frac 12$ is 
                generated. When this signal collides with the bottommost signal, it bounces into
                a signal of speed~$\frac 14$ that will create the first SE-NW diagonal of the first square of
                the row of squares of the left, $4$~time steps after it will collide with the vertical line.
        \end{itemize}
\end{itemize}

On top of the construction states, except on the vertical lines, we add a parity layer $\{0, 1\}$: on a 
configuration, two neighboring cells of the construction must have different parity bits, 
otherwise a killer state appears. On the left of a vertical line there has to be parity~$1$ and
on the right parity~$0$, otherwise the killer state pops up again. This is to ensure that the columns 
will always contain an even number of squares.

\begin{figure}[ht!]
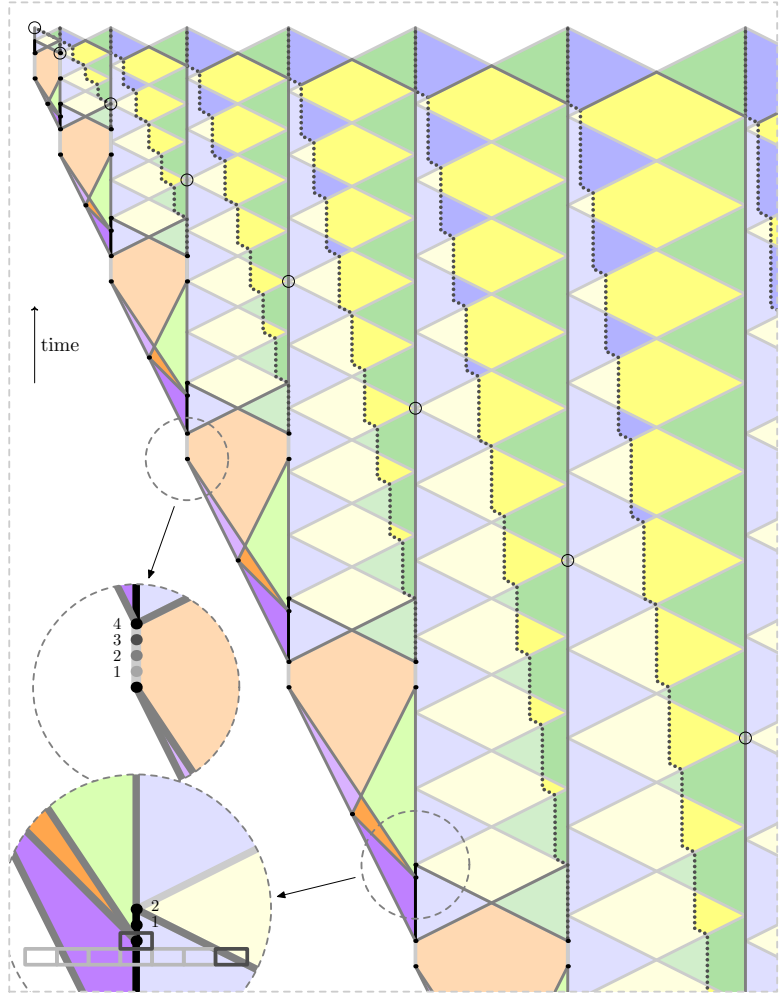

    \centering
    \scalebox{.75}{\includepicture{butterfly-0.mps}}
    \caption{\label{butterfly:baselayer} The sparse grid construction: it is based on columns containing
    a finite number of squares, whose number decreases when we go left. Note that the figure is 
    squeezed vertically.}
\end{figure}
The following lemmas address which types of configurations may occur in the limit set of this CA. First
note that any configuration wherein the construction states do not appear in the right order do not have a preimage.

\begin{lemma}\label{limitlem:squares}
    The sequence of preimages of a segment 
    ended by consecutive vertical lines 
    (and containing none) is a slice of a column of squares of even side.
\end{lemma}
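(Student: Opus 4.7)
My plan is to proceed in three stages, using the killer mechanism and the parity layer as the only two real levers. First, I would argue that within the two bounding vertical lines, no killer state and no quiescent state may appear in any preimage of the segment. A killer spreads by its very definition, so if one appeared inside the slice in some preimage, then a bounded number of steps later it would overrun the vertical lines, contradicting their presence in the configuration we started from. A quiescent state adjacent to any construction state immediately triggers a killer by the local rule, so the same reasoning rules it out. Hence the segment and all of its preimages consist entirely of construction states between the two bounding vertical lines, which themselves persist vertically since they have speed~$0$.

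Second, I would invoke the parity layer to pin down the parity of the width. On the construction side, the cell immediately to the right of the left vertical line must carry parity~$0$, and the cell immediately to the left of the right vertical line must carry parity~$1$; since consecutive construction cells are required to alternate parities, the width of the segment is forced to be even. This even number will play the role of the common side length of all squares in the column.

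Third, I would reconstruct the column structure by pushing the signal constraints back through the sequence of preimages. The rule ``signals must have different colours on their left and right'' leaves essentially no freedom: the SW--NE and SE--NW diagonals, the counting signal and the starting signals each have a determined behaviour, and any placement not matching the prescribed bounce, shift, or cancellation produces a disallowed left/right colour mismatch and hence a killer. Tracing the diagonals bounce by bounce off the two bounding vertical lines therefore recovers a stack of squares of the fixed even side, and the counting signal forces the column to close after the appropriate number of squares. This identifies the slice as a piece of a single column of squares of even side.

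The main obstacle, as I see it, is not one single deep step but a careful case analysis at the signal level: one must check that at every site in the slice, and at every site in every preimage, the local rule together with the colour-consistency constraint admits only the behaviour corresponding to a genuine column. Ruling out exotic stable signal configurations, particularly around the interaction of the counting and starting signals at the bottom of the column, is where the bookkeeping will be heaviest.
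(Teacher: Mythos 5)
Your first stage (no killer or quiescent state can appear strictly between the two vertical lines in any preimage) and your second stage (the parity layer forces the width to be even) are sound and match what the paper does, the latter explicitly and the former implicitly. The gap is in your third stage, which is where the actual content of the lemma lives. You attribute the rigidity of the picture to local colour-consistency: ``any placement not matching the prescribed bounce, shift, or cancellation produces a disallowed left/right colour mismatch and hence a killer.'' But the crucial constraint that makes the cells genuine \emph{squares} --- namely that the distance from the left vertical line to the SE-NW diagonal equals the distance from the right vertical line to the SW-NE diagonal --- cannot be enforced this way. There are only finitely many background colours and they carry no positional information, so at a generic time step a ``skewed'' pair of diagonals (crossing off-centre, bouncing off the walls at different times) presents exactly the same local left/right colour pattern as a correct one, and it propagates consistently under the local rule in both time directions through arbitrarily many bounces. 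Local checks alone therefore do not recover a column of squares; they only recover a column of quadrilaterals.

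The paper closes this hole with a non-local argument that your proposal does not supply: every column has a bottom (reached after finitely many preimages, since the counting signal bounds the number of squares in a column), and at that bottom the \emph{only} possible preimages of the two diagonals are the starting signals, which launch both diagonals in lockstep from the two corners. A skewed pair is therefore incompatible with having an infinite chain of preimages --- the starting signals would either have created an extra vertical line in between or would not have fired simultaneously on the two walls. You do say you will ``push the signal constraints back through the sequence of preimages,'' and you correctly flag the bottom-of-column interaction as the heaviest bookkeeping, but you locate the eventual contradiction in local colour mismatches rather than in the creation event, and without that step the conclusion ``squares of even side'' (as opposed to ``parallelograms of even width'') does not follow.
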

\begin{proof}
    Suppose a configuration contains two vertical-line symbols, then to be in the limit set, in between
    these two symbols there needs to be two diagonal symbols, one for the SE-NW one and one for SW-NE one, 
    a symbol for the counting signal, and in between these signals there needs to be the appropriate 
    colors: there is only one possibility for each of them. If this is not the case, then the configuration
    has no preimage. 
    
    Also, the distance between the first vertical line and the SE-NW diagonal needs to be the
    same than the distance between the second vertical line and the SW-NE diagonal, otherwise the signals
    at the bottom --~the ones starting a column, that are the only preimages of the first diagonals~-- would have, in one case, created a vertical line in between, and in the other case, not started at the same time on 
    the right vertical.

    The side of the squares is even, otherwise the parity layer has no preimage.
\end{proof}
\begin{lemma}\label{limitlem:distances}
    A configuration of the limit set containing at least three vertical-line symbols needs to
    verify, for any three consecutive symbols, that if the distance between the first one and the second one is~$k$, then the distance between the second one and the third one needs to be $(k + 2)$.
\end{lemma}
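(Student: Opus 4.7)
The plan is to trace the space-time coordinates of the starting signals around the middle vertical line and observe that three consecutive vertical lines are rigidly linked through the bottommost $-\frac 14$ signal and the $-\frac 13$ emission rule. Fix three consecutive vertical-line symbols at positions $L < M < R$ in a configuration of the limit set and set $k_1 = M - L$, $k_2 = R - M$. By Lemma~\ref{limitlem:squares}, each of the two columns $[L, M]$ and $[M, R]$ is a complete stack of squares carrying the full signal structure, so in any preimage the diagonals, the counting signal and the starting signals sit exactly where the construction places them.

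First I would locate the reference times. Let $t_M$ be the time at which the vertical line at position $M$ originates, \ie the bottom-left corner of the right column $[M,R]$. The bottommost $-\frac 14$ signal passes through $(M, t_M)$ and must reach the bottom-left corner of the left column at $(L, t_L)$, giving $t_L = t_M + 4 k_1$. By the emission rule, the $-\frac 13$ starting signal is created at $M$ exactly $2$ time steps before the SE-NW diagonal of the first (bottom) square of $[M, R]$ collides with $M$; since that collision happens at time $t_M + k_2$, the $-\frac 13$ signal is emitted at $\tau = t_M + k_2 - 2$. A direct computation of the intersection of its trajectory with the bottommost signal through $(M, t_M)$ places the collision at position $M - (k_2 - 2)$ and time $t_M + 4(k_2 - 2)$.

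The construction then forces this collision to land exactly on $(L, t_L)$: by rule, the new segment of the bottommost $-\frac 14$ signal is re-emitted $4$ time steps after the $-\frac 13$ collision and must seamlessly continue the boundary of the empty region, and the $+\frac 14$ bounce back that draws the first SE-NW diagonal of the left column originates from the same point. Identifying $(L, t_L) = (M - k_1, t_M + 4 k_1)$ with $(M - (k_2 - 2), t_M + 4(k_2 - 2))$ immediately yields $k_1 = k_2 - 2$, \ie $k_2 = k_1 + 2$, which is the claimed relation.

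The delicate step is not the arithmetic but verifying that any other value of $k_1$ is incompatible with lying in the limit set. I would argue this via the forced preimage structure: the $-\frac 13$, $-\frac 12$ and $-\frac 14$ signals are pinned cell by cell in every preimage, and the constraint that every signal have distinct background colors on its two sides means that any mismatch at the collision point locally triggers the killer state. Since the killer state spreads deterministically, such a configuration admits only finitely many preimages and cannot belong to $\limitset{\ca A}$.
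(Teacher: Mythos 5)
Your proof follows essentially the same route as the paper's: both pin the relation $k_2 = k_1 + 2$ on the starting signals at the bottom of a column, whose emission time is dictated by the incoming SE-NW diagonal of the first square on the right (hence by $k_2$) and whose collision with the bottommost speed-$\left(-\frac 14\right)$ signal fixes the position of the next vertical line to the left (hence $k_1$); you merely make explicit the coordinate arithmetic that the paper states qualitatively, after reaching the bottom of the column via the forced preimage structure. One non-load-bearing slip: the speed-$\frac 14$ bounce that launches the first SE-NW diagonal of the left column originates from the collision of the speed-$\left(-\frac 12\right)$ signal with the bottommost one (roughly halfway along), not from the speed-$\left(-\frac 13\right)$ collision point you identify with $(L, t_L)$.
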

\begin{proof}
    Let us take a configuration containing at least three vertical-line symbols, take three consecutive ones.
    The states between them have to be of the right form as we said above. Suppose the first of these 
    symbols is at distance~$k_1$ of the second one, which is at distance~$k_2$ of the third one. This means 
    that the first (resp. second) segment defines a column of squares of side~$k_1$ 
    (resp.~$k_2$). It is clear that the second column of squares cannot end before the first one.
    
    Now let~$i$ be the position of the counting signal of the first column and~$j$ the
    distance between the SW-NE diagonal and the left vertical line. The preimage of the first segment
    ends $(k_1i + j)$ (resp. $(k_1(i - 1) + j)$) steps before if the counting signal is on the left (resp. right)
    of the SW-NE diagonal. Then, the preimages of the left and right vertical lines of this column are the
    creating signals. Before the signal created on the right bounces on the one of speed~$-\frac 14$ created on the
    left, it collides with the one of speed~$-\frac 13$, thus determining the height of the squares on the right 
    column of squares. So $k_1 = k_2 - 2$.
\end{proof}

\begin{lemma}\label{limitlem}
    A configuration having two vertical-line symbols pertaining to the limit set needs to 
    verify one of the following statements.
    \begin{itemize}
        \item It is constituted of a finite number of vertical lines.
        \item It appears in the space-time diagram of fig.~\ref{butterfly:baselayer}.
        \item It is constituted of an infinite number of vertical lines, then starting
            from some position it is equal on the right to some (shifted) line of 
            fig.~\ref{butterfly:baselayer}.
    \end{itemize}
\end{lemma}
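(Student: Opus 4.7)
The plan is to combine the two previous lemmas to pin down the structure of any configuration of the limit set with at least two vertical-line symbols, then split into cases according to how many such symbols appear and where. The core idea is that the vertical lines, together with what must lie between them, leave essentially no freedom except at the ``left edge'' of the configuration.

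First I would enumerate the positions of vertical-line symbols. If three or more of them appear, Lemma~\ref{limitlem:distances} forces the sequence of gaps between consecutive symbols to grow by~$2$ each time when moving rightward, and to shrink by~$2$ when moving leftward. Combined with Lemma~\ref{limitlem:squares}, this implies that between any two consecutive vertical lines, the content of the configuration is entirely determined as a slice of a column of squares whose even side equals the gap. So once a single gap is known, the whole ``chain'' of vertical lines on either side is forced, as long as such a chain is compatible with the CA rule.

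Second, I would split on the cardinality of this chain. If it is finite, the configuration directly falls into the first bullet. Otherwise, I would argue that the chain cannot be infinite on the left: since gaps are positive integers shrinking by~$2$ when moving leftward, only finitely many leftward steps can survive. So the infinite family must extend only to~$+\infty$ on the right, and there is a leftmost vertical line at some position~$x_0$. From~$x_0$ onward, the right-hand part of the configuration is completely pinned down by the two lemmas and matches, up to a horizontal shift depending on the first gap, some row of the space-time diagram of fig.~\ref{butterfly:baselayer}.

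The main thing left to analyze, and the step I expect to be the delicate one, is what can sit to the left of~$x_0$: no further vertical line appears there, yet the states still have to admit infinite preimage sequences, which strongly restricts them via the starting signals and the parity layer described in Section~\ref{sparsegrid}. The goal of this final step is to show that only two regimes survive. Either the left part carries the genuine birth-signal structure of the would-be next (smaller) column, and then iterating Lemmas~\ref{limitlem:squares} and~\ref{limitlem:distances} in reverse reproduces the whole figure, so the configuration is literally a row of fig.~\ref{butterfly:baselayer} (case~2); or the birth structure is absent or broken, in which case the left part can only consist of quiescent and simple construction states compatible with preimages but not matching the figure, giving case~3. Making this dichotomy exhaustive will require a short case analysis on which of the three starting signals (of respective speeds $-\tfrac14$, $-\tfrac13$, $-\tfrac12$) are present to the left of~$x_0$ and on their interaction with the parity layer; ruling out any ``third'' behaviour is where I expect the bulk of the technical work to lie.
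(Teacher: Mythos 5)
Your proposal follows essentially the same route as the paper: use Lemma~\ref{limitlem:squares} to rigidify the content between consecutive vertical lines, use Lemma~\ref{limitlem:distances} to propagate the column structure leftward with gaps shrinking by~$2$ until the smallest column (after which nothing can lie on the left), and then distinguish whether the left end carries the genuine column-starting structure (giving a row of fig.~\ref{butterfly:baselayer}) or is destroyed, e.g.\ by a killer state coming from the left (giving the third case). The final dichotomy you flag as delicate is exactly the point the paper itself treats tersely, so your outline matches the published argument in both structure and level of detail.
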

\begin{proof}
    We place ourselves in the case of a configuration of the limit set.
    Because of lemma~\ref{limitlem:squares}, two consecutive vertical lines at distance~$k$ from each other
    define a column of squares. In a space-time diagram they belong to, on their left there
    necessarily is another column of squares, because of the starting signal
    generated at the beginning of the left vertical line, except when $k = 3$, in which case there is
    nothing on the left. In this column, the vertical lines are at distance~$(k - 2)$, 
    see lemma~\ref{limitlem:distances}. So, if there is an infinite number of vertical lines,
    either it is of the form of fig.~\ref{butterfly:baselayer}, or there is some
    killer state coming from infinity on the left and ``eating'' the construction.
\end{proof}

\subsection{\label{computationingrid}Backward computation inside the grid}
We now wish to embed the computation of a reversible Turing machine inside the
aforementioned sparse grid, which for this purpose is better seen as a lattice. The fact
the TM is reversible allows us to embed it backwards in the CA. We will below denote by
\emph{TM time} (resp. \emph{CA time}) the time going forward for the Turing machine (resp.
the CA); on a space-time diagram, TM time goes from top to bottom, while CA time goes from
bottom to top (\cf arrows in fig.~\ref{computation:inbutterfly}). That way, the beginning
of the computation of the TM will occur in the first (topmost) square of the first
(leftmost) column of squares.

We have to ensure that any computation of the TM is possible, and in particular ensure that
such a computation is consistent over time; the idea is that at the first TM time step,
\ie the moment the sparse grid disappears, the tape is on each of the vertical line
symbols, but since these all disappear a finite number of CA steps before, we have to
compel all tape cells to shift to the right regularly as TM time increases.

Moreover, we want to force the presence of exactly one head (there could be none if
it were, for instance, infinitely far right). To do that, the grid is divided into three
parts that must appear in this order (from left to right): the left of the head, the right
of the head (together referred to as the computation zone) and the unreachable zone (where
no computation can ever be performed), resp. in blue, yellow and green in
fig.~\ref{computation:inbutterfly}.

The vertices of our lattice are the top left corners of the squares, each one marked by the
rebound of a SE-NW diagonal on a vertical line, while the top right corners will just serve
as intermediate points for signals. More precisely, if we choose (arbitrarily) the top left
corner of the first square of the first column to appear at site~$(0, 0)$, then for any $i,
j\in\NN$, the respective sites for the top left and top right corners of $s_{i, j}$, the
$(j + 1)$-th square of the $(i + 1)$-th column, are the following
(cf.~fig.~\ref{computation:inbutterfly}): 
\[
  \left\{\hspace{-1mm}
    \begin{array}{
      l@{\;}l@{\;}l} s^\ell_{i, j} &=& (i(i + 1), -2(i + 1)j)\\
      s^r_{i, j} &=& ((i + 1)(i + 2), -2(i + 1)j)\text{.} 
    \end{array}\right.  
\]

Fig.~\ref{computation:mt} illustrates a computation by the TM, with the three
aforementioned zones, as it would be embedded the usual way (but with reverse time) into a
CA, with site~$(i, -t)$ corresponding to the content of the tape at $i\in\NN$ and TM
time~$t\in\NN$. 

Fig.~\ref{computation:inter} represents another, still simple, embedding, which is a
distortion of the previous one: the head moves every even time step within a tape 
that is shifted every odd time
steps, so that instead of site~$(i, -t)$, we have two sites, $(i + t, -2t)$ and $(i + t,
-2t - 1)$, resp. the \emph{computation site} (big circle on fig.~\ref{computation:inter})
and the \emph{shifting site} (small circle on fig.~\ref{computation:inter}). The head only
reads the content of the tape when it lies on a computation site. This type of embedding
can easily be realized forwards or backwards (provided the TM is reversible).

Our embedding, derived from the latter, is drawn on fig.~\ref{computation:inbutterfly}. The
``only'' difference is the replacement of sites $(i + t, -2t)$ and $(i + t, -2t - 1)$ by
sites $s^\ell_{i, t}$ and $s^\ell_{i, t + 1}$. Notice that as the number of squares in a
column is always finite, each square can ``know'' whether its top left corner is a
computation or a shifting site with a parity bit. More precisely, the $j$-th square (from
bottom to top) of a column has a computation site on its top left if and only if $j$ is
even.

Let $s_{i, j}$ be a square of our construction. $s^\ell_{i, j}$ is either a computation
site or a shifting site. In the latter case, it is supposed to receive the content of a
cell of the TM tape with an incoming signal of speed~$-1$. All it has to do is to send it
to $s^\ell_{i, j - 1}$ (at speed~$0$), which is a computation site. In the former case,
however, things a slightly more complicated. The content of the tape has to be transmitted
to $s^\ell_{i - 1, j - 1}$ (which is a shifting site). To do that, a signal of speed~$0$ is
sent and waits for site~$s^r_{i - 1, j}$, which sends the content to $s^\ell_{i - 1, j -
1}$ with a signal of speed~$-1$ along the SE-NW diagonal. The problem is to recognize which
$s^r$~site is the correct one. Fortunately, there are only two possibilities: it is either
the first or the second $s^r$~site to appear after (in CA time, of course) $s^\ell_{i, j}$
on the vertical line. The first case corresponds exactly to the unreachable zone (where
$j\leq i$), hence the result if the three zones are marked. The lack of other cases is due
to the number of $s_i$~squares, which is only $2(i + 1)$.

Another issue is the superposition of such signals. Here again, there are only two cases:
in the unreachable zone there is none, whereas in the computation zone a signal of
speed~$0$ from a computation site can be superimposed to the signal of speed~$0$ sent by
the shifting site just above it. As aforesaid, there is no other case because of the
limited number of $s_i$ squares. Thus, there is no problem to keep the number of states of
the CA finite, since the number of signals going through a same cell is limited to two at
the same time.

While the two parts of the computation zones are to be separated by the presence of a head,
the unreachable zone is at the right of a signal that is sent from any computation site that
has two diagonals (one from the left and one from the right) below it (indicated as circles
on fig.~\ref{butterfly:baselayer}), goes at speed~$0$ until the next $s^r$~site, then at
speed~$1$ (along SE-NW diagonals) to the second next shifting site, and finally at
speed~$0$ again, to the next computation site (cf.~fig.~\ref{computation:inbutterfly}),
which also has two diagonals below it if the grid contains no error. Another way to detect
the unreachable zone is to detect that the counting signal crossed the SW-NE diagonal
exactly two CA time steps after it has crossed the SE-NW diagonal. This means that the 
unreachable zone is structurally coded in the construction.

Now only the movements of the head remain to be described (in black on
fig.~\ref{computation:inbutterfly}). Let $s^\ell_{i, j}$ be a computation site containing
the head.  
\begin{itemize} 
  \item If the previous move of the head (previous because we are
      in CA time, that is, in reverse TM time) was to the left, the next computation site
      is the one just above, that is, $s^\ell_{i, j - 2}$. The head is thus transferred by a
      simple signal of speed~$0$.  
  \item If the previous move was to stand still, the next
        computation site is $s^\ell_{i - 1, j - 2}$. It can be reached by a signal of
        speed~$0$ until the second next $s^r$~site, from which a signal of speed~$-1$
        (along a SE-NW diagonal) is launched, to be replaced by another signal of speed~$0$
        from $s^\ell_{i - 1, j - 1}$ on.  
  \item If the previous move was to the right, the
          next computation site is $s^\ell_{i - 2, j - 2}$. It can be reached by a signal
          of speed~$0$ until the second next $s^r$~site, from which a signal of speed~$-1$
          (along a SE-NW diagonal) is launched, to be replaced by another signal of
          speed~$0$ from $s^\ell_{i - 1, j - 1}$ on, which itself waits for the next
          $s^r$~site (which is $s^r_{i - 2, j}$) to start another signal of speed~$1$
          (along a SW-NE diagonal) that is finally succeeded to by a last signal of
          speed~$0$ from $s^\ell_{i - 2, j - 1}$ on.  
\end{itemize}

\begin{figure}[htp!]
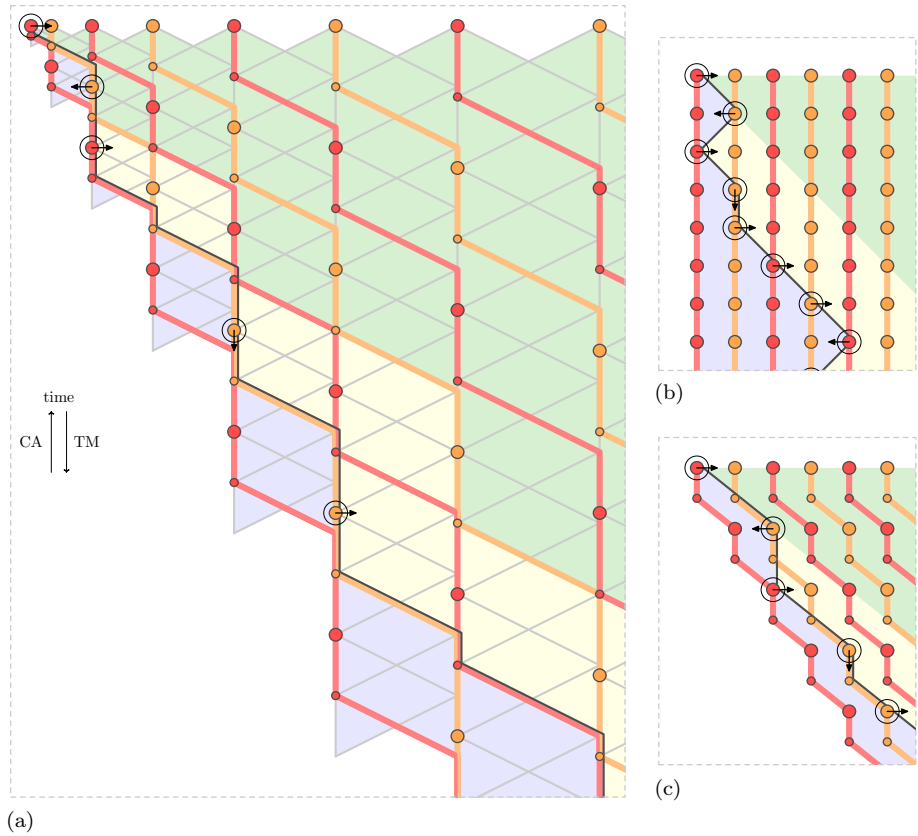
 \centering \begin{minipage}{.66\linewidth}
\hspace{-.25cm}\subfloat[]{\label{computation:inbutterfly}\scalebox{.6}{\includepicture{butterfly-1.mps}}}
\end{minipage} \hspace{.25cm}\begin{minipage}{.3\linewidth}
  \subfloat[]{\label{computation:mt}\scalebox{.6}{\includepicture{butterfly-2.mps}}}\\
\subfloat[]{\label{computation:inter}\scalebox{.6}{\includepicture{butterfly-3.mps}}}
\end{minipage} \caption{\label{butterfly:computation}The embedding of a Turing machine
computation in the sparse grid (\ref{computation:inbutterfly}), compared to the usual
embedding (\ref{computation:mt}) and a slightly distorted one (\ref{computation:inter}).
The paths followed by the content of each cell of the tape are in red and orange (two
colors just to keep track of the signals), while the one of the head is in black. The
arrows indicate the next move of the head (for TM time, going towards the bottom). The
green background denotes the zone the head cannot reach, while the computation zone is in
blue on the left of the head and in yellow on its right.} \end{figure}

\subsection{\label{hooper}The computation itself}
As we said before, the computation will take place on the computation sites, which will
contain two kinds of tape cells: one for the oracle and one for the work. In the unreachable
zone there are only oracle cells, which do not change over time except for the shifting.
Now we want to eliminate all space-time diagrams corresponding to rejecting computations 
of some Turing machine $M$. \citet{Ben1973} has proved that for any Turing machine, we
can construct a reversible one computing the same function. So a first idea would just 
be to encode this reversible Turing machine in the sparse grid; however there is no way
to guarantee that the work tape that was non-deterministically inherited from the
past corresponds to a valid configuration and by the time the Turing machine ``realizes'' 
this it will be too late, there will already exist configurations containing some oracle
that we would otherwise have rejected.

The solution to this problem is to use a robust Turing machine in the sense of \citet{Hoo1966},
that is to say a Turing machine that regularly rechecks its whole computation. \citet{KO2008}
have constructed reversible such machines. In these constructions the machines 
constructed were working on a bi-infinite tape, which had the drawback that some infinite
side of the tape might not be checked; here it is not the case, hence we can modify the machine
so that on an infinite computation it visits all cells of the tape (we omit the details for
brevity's sake).

In terms of limit sets, this means that if some oracle is rejected by the machine, then it
must have been rejected an infinite number of times in the past (CA time). So, only oracles 
pertaining to the desired class may appear in the limit set. 

Furthermore, even if some killer state coming from the right eats the grid, at some point in 
the past of the CA, it will be in the unreachable zone, and stay there for ever, so the 
computation from that moment on even ensures that the oracle computed is correct. Though,
that doesn't matter, because in this case the configurations of the corresponding space-time
diagram that are in the limit set are uniform both on the right and on the left except for 
a finite part in the middle, and are hence computable.

\section{\label{CB}Cantor-Bendixson rank of limit sets}

The \emph{Cantor-Bendixson derivative} of some set $S\subseteq \Sigma^\ZZ$, with $\Sigma$ finite,
is noted $\CBd{S}$ and consists of all configurations of~$S$ except the isolated ones. A 
configuration $\cacf c$ is said to be \emph{isolated} if there exists a pattern~$P$ such that 
$\cacf c$ is the only configuration of~$S$ containing $P$ (up to a shift). For any
ordinal $\lambda$ we can define $\CBdn{S}{\lambda}$, the Cantor-Bendixson derivative of 
rank $\lambda$, inductively:
\[
	\begin{array}{l@{\;\;}c@{\;\;\;}l}
		\CBdn{S}{0} &=& S\\
		\CBdn{S}{\lambda + 1} &=& \CBd{\CBdn{S}{\lambda}}\\
		\CBdn{S}{\lambda} &=& \displaystyle{\bigcap_{\gamma<\lambda}}\CBdn{S}{\gamma}\text{.}
	\end{array}
\]

The \emph{Cantor-Bendixson rank} of $S$, denoted by $\CB{S}$, is defined as the first ordinal ~$\lambda$ such that $\CBdn{S}{\lambda + 1} = \CBdn{S}{\lambda}$. In particular, when $S$ is countable,
$\CBdn{S}{\CB{S}}$ is empty. An element~$s$ is of rank~$\lambda$ in~$S$ if $\lambda$ is the
least ordinal such that $s\notin\CBdn{S}{\lambda}$. For more information about Cantor-Bendixson
rank, one may skim~\cite{Kechris}.

The Cantor-Bendixson rank corresponds to 
the height of a configuration corresponding to a preorder on patterns as noted by \citet{BDJ2008}{}.
Thus, it gives some information on the way the limit set is structured pattern-wise. A 
straightforward corollary of the construction above is the following.

\begin{corollary}\label{CBrank}
    There exists a constant $c\leq 10$ such that for any \pizu class $S$, there exists a
    CA $\ca A$ such that 
    \[ 
        \CB{\limitset{\ca A}} = \CB{S}+c\text{.}
    \]
\end{corollary}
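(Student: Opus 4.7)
The plan is to bound the Cantor--Bendixson rank of $\limitset{\ca A}$ by exploiting the classification of configurations provided by Lemma~\ref{limitlem}. I would first partition $\limitset{\ca A}$ into three classes: (a)~configurations with no vertical-line symbol or with only finitely many (including quiescent configurations and killer-state patterns), (b)~full space-time diagram rows (case 2 of Lemma~\ref{limitlem}) associated with some oracle $x \in S$, and (c)~rows with a killer state invading the grid from one side (case 3 of Lemma~\ref{limitlem}).

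For class~(a), each configuration is determined up to a finite shift by a bounded-size finite structure: a killer-pattern shape or a list of vertical-line positions. Such configurations form a countable, computably-structured set, and I would exhibit a uniform bound $c_0$ (well under $10$) on their Cantor--Bendixson rank by directly constructing isolating patterns at each stage of the derivation, producing a finite tower of ranks corresponding to ``how many vertical lines are present'' plus a constant for killer configurations.

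The heart of the argument is then to handle classes~(b) and~(c). The key claim, to be proved by transfinite induction on the rank, is that a configuration $\cacf c$ in class~(b) or~(c) encoding oracle $x\in S$ has Cantor--Bendixson rank $\alpha + c_1$ in $\limitset{\ca A}$, where $\alpha$ is the rank of $x$ in $S$ and $c_1$ is a fixed constant (absorbing both the structural overhead of the sparse grid and the finite offset introduced by a killer state in class~(c)). The argument relies on a direct correspondence between clopen neighborhoods of $x$ in $\cantor$ and patterns on $\cacf c$: a pattern wide enough to cross $k$ consecutive vertical lines reveals exactly $k$ bits of $x$, and conversely, knowing $k$ bits of $x$ uniquely determines the corresponding local structure of $\cacf c$, up to an explicit finite shift and time-slice choice that can be read off from the pattern itself.

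The main obstacle will be the multiplicity of time slices per oracle: each $x\in S$ gives rise to infinitely many configurations in $\limitset{\ca A}$ (one per row of the space-time diagram, plus all horizontal shifts and the killer-state variants of class~(c)), and I must show that all of them share the same Cantor--Bendixson rank up to the constant $c_1$. This reduces to arguing that a pattern on any time slice locally determines both the oracle $x$ and the time offset within the space-time diagram, so that isolation of $x$ in $S$ at rank $\alpha$ lifts to isolation of the whole time slice in $\limitset{\ca A}$ at rank $\alpha + c_1$, and vice versa. Combining the bounds from the three classes then yields $\CB{\limitset{\ca A}} = \CB{S} + c$ with $c = \max(c_0, c_1) \leq 10$.
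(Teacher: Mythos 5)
The paper itself offers no real proof here: it asserts the corollary is a ``straightforward'' consequence of the construction and says only that the constant is the pattern overhead of the sparse grid. Your outline is the natural fleshing-out of that remark and follows the same route the authors intend: split $\limitset{\ca A}$ according to Lemma~\ref{limitlem}, show the degenerate configurations contribute a bounded rank, and show that a configuration carrying an oracle $x$ has rank equal to the rank of $x$ in $S$ plus a fixed offset. Your identification of the main obstacle (many time slices and shifts per oracle) is also the right one, and it is resolvable exactly as you suggest: a pattern spanning a full column determines the column index (from the square side, by Lemma~\ref{limitlem:distances}), the square number (from the counting signal) and the phase (from the diagonals), hence the absolute CA time and position.

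Two points in your final assembly do need repair. First, $c=\max(c_0,c_1)$ does not give the claimed exact equality: the derivation stabilizes at roughly $\max(c_0,\,\CB{S}+c_1)$, and for this to equal $\CB{S}+c$ uniformly over all $S$ (including $S$ empty or perfect, where $\CB S=0$) you need the two constants to coincide, i.e.\ you must show that the ``garbage'' configurations are exhausted in exactly the same number of derivative steps as the per-oracle structural overhead. Second, your classes are not independent under derivation: a configuration with finitely many vertical lines, or the uniform quiescent configuration, is an accumulation point of class-(b) rows (push the grid off to the right, or let the columns grow), so the rank of a class-(a) configuration can depend on $S$ and is not simply bounded by a constant $c_0$ computed in isolation. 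Both issues are bookkeeping rather than a wrong approach, but as written the last step proves only a two-sided estimate $\CB{S}\leq\CB{\limitset{\ca A}}\leq\CB{S}+10$, not the stated equality.
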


Here the constant corresponds to the pattern overhead brought by the sparse-grid construction.

\section*{Acknowledgments}\label{sec:Acknowledgments}

This work was sponsored by grants EQINOCS ANR 11 BS02 004 03 and TARMAC ANR 12 BS02 007 01.
The authors would like to thank Nicolas Ollinger and Bastien Le Gloannec for some useful discussions.

\printbibliography

\end{document}